\newtheorem{theorem}{Theorem}[section]
\newtheorem{lemma}{Lemma}[section]
\DeclareMathOperator{\bigO}{O}
\newcommand{\dsg}{\textsc{Densest-Subgraph}}
\author{Manuel Sorge}
\title{A More Complicated Hardness Proof for Finding Densest Subgraphs in Bounded Degree Graphs}
\begin{document}

\maketitle
\begin{abstract}
  We consider the \dsg{} problem, where a graph and an integer~$k$ is given and we search for a subgraph on exactly~$k$ vertices that induces the maximum number of edges. We prove that this problem is NP-hard even when the input graph has maximum degree three.
\end{abstract}

\section{Introduction}
\label{sec:intro}
We consider the following problem.
\begin{quote}
  \dsg :\\
  \textbf{Input:} A graph~$G=(V,E)$, and a nonnegative integer~$k$.\\
  \textbf{Task:} Find a vertex set~$S\subseteq V$ of size exactly~$k$ such
  that~$G[S]$ has maximum number of edges.
\end{quote}
We call every vertex set~$K \subseteq V$ with~$|K|=k$ a \emph{solution} for this particular instance. \dsg{} is clearly a fundamental problem and has received much attention in the literature~\cite{Kos04, II09, BBH11, FPK01}. In particular, this problem has been proven NP-hard in various variants. We prove that \dsg{} remains NP-hard on graphs of degree at most three. While this result has been proven before~\cite{FS97}, we show it in a much more complicated manner. More precisely, we first give a reduction from {\sc Clique} to \dsg{} with maximum degree five, proving its correctness through an intricate replacement argument. Then we successively reduce the maximum degree by replacing each high-degree vertex by gadget graphs. We prove that, for some optimal solution~$K$, each gadget graph is either fully contained in~$S$ or not at all. This is done using elaborate case-distinctions which are most tedious to verify. We note that the previous proof is essentially contained within two pages. Our proof needs eight pages.

\paragraph{Preliminaries.} We use standard graph notation as used, for example, by Diestel~\cite{Die10}. Where it is appropriate, we denote~$n := |V(G)|$ and~$m := |E(G)|$.

\section{NP-hardness of \dsg{} in Graphs with Maximum Degree Three}
Our proof of NP-hardness is divided into three steps. We first show that \dsg{} is NP-hard in graphs with maximum degree at most five. Then, we proceed to reduce the maximum degree in hard instances. We successively employ two gadgets that replace vertices in the input graph, giving hardness in graphs with degree at most four and then hardness in graphs with degree at most three.
\begin{lemma}
  \dsg{} is NP-hard even on graphs with maximum degree five.
\end{lemma}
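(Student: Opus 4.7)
The plan is a polynomial-time reduction from \textsc{Clique} to \dsg{} whose output graph has maximum degree five. Given an instance $(G,s)$ of \textsc{Clique}, I would construct a graph $H$ and an integer $k$ as follows. Fix $L := \max(3,\lceil \Delta(G)/3\rceil)$. For each vertex $v \in V(G)$ introduce a vertex gadget $C_v$ that is a cycle on $L$ vertices, and for each edge $\{u,v\} \in E(G)$ add a single edge between some vertex of $C_u$ and some vertex of $C_v$, spreading the attachments so that no cycle vertex ends up with more than three external edges. Together with the two incident cycle edges, every vertex of $H$ then has degree at most $2+3=5$. Set $k := sL$ and declare the target edge count to be $t := sL + \binom{s}{2}$.

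Forward direction: if $K \subseteq V(G)$ is an $s$-clique, then $S := \bigcup_{v \in K} V(C_v)$ has $|S|=sL$, contributes $sL$ cycle edges from its $s$ full cycles, and $\binom{s}{2}$ crossing edges from the $G$-edges inside $K$, so $|E(H[S])| = t$.

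Backward direction: I would show that every $S \subseteq V(H)$ with $|S|=k$ and $|E(H[S])|\geq t$ witnesses an $s$-clique in $G$. The backbone is a \emph{normalisation lemma}: from any such $S$ one can produce $S'$ with $|S'|=k$ and $|E(H[S'])|\geq |E(H[S])|$ in which each gadget $C_v$ is either entirely contained in $S'$ or disjoint from~$S'$. Once $S'$ is in this canonical form, the number of full gadgets is necessarily~$s$, the cycle edges contribute exactly $sL$, and the remaining $\geq \binom{s}{2}$ edges are all crossings between the $s$ selected gadgets; since each crossing realises a $G$-edge between the two corresponding vertices of $G$, the $s$ gadgets must come from an $s$-clique.

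The main obstacle is proving this normalisation lemma. A partially selected cycle with $c_v<L$ vertices loses at least one cycle edge, which only gives a thin margin against the crossings a partial cycle can still contribute, so a naive exchange (empty one partial cycle and complete another) typically destroys external edges and can decrease $|E(H[S])|$. I would proceed by a detailed case distinction on the pattern of partial cycles in $S$---their number, their sizes, and which external edges they realise---and in each case exhibit a specific local modification (shifting vertices between two partial cycles, or between a partial cycle and a gadget outside the current selection) that strictly reduces the number of partial cycles without reducing the edge count. Iterating this transformation yields the canonical $S'$. The case distinctions are genuinely intricate because $L$ is of order $\Delta/3$, so cycles are short and each cycle vertex may carry up to three external edges, generating many sub-configurations that must be checked one at a time.
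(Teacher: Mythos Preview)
Your high-level strategy matches the paper's: replace each $G$-vertex by a gadget, attach the original edges to gadget boundary vertices, and prove a normalisation lemma saying some optimum is a union of whole gadgets. The gap is that your gadget is far too weak for that lemma to hold. A cycle of length $L=\lceil\Delta/3\rceil$ loses only \emph{one} internal edge when you drop from a full cycle to an arc, while every cycle vertex may carry \emph{three} external edges; so the penalty for partial gadgets is swamped by what external edges can contribute, and non-canonical sets can strictly beat every canonical one.

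Concretely, take $G=K_{9,9}$ (bipartite, hence no $3$-clique), so $\Delta=9$, $L=3$, $s=3$, $k=9$, $t=12$. Your rule only says ``spread attachments so that no cycle vertex gets more than three external edges''; one admissible choice is to let, for every left vertex $u_i$, the gadget vertex $u_i^{1}$ be the attachment for the three $G$-edges $\{u_i,v_1\},\{u_i,v_2\},\{u_i,v_3\}$, and symmetrically let $v_j^{1}$ be the attachment for $\{u_1,v_j\},\{u_2,v_j\},\{u_3,v_j\}$. Then $\{u_1^{1},u_2^{1},u_3^{1},v_1^{1},v_2^{1},v_3^{1}\}$ already induces a $K_{3,3}$ in the external edges (nine edges, no cycle edges), and adding $u_1^{2},u_1^{3},u_2^{2}$ contributes four more cycle edges, giving $|S|=9$ and $|E(H[S])|=13\geq t$. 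Meanwhile every union of three full triangles yields at most $9+2=11<t$. So with this (perfectly legal) distribution the reduction is \emph{wrong}, not merely hard to analyse: a no-instance of \textsc{Clique} maps to a yes-instance of \dsg{}. You never specify a distribution that would rule such configurations out, and it is not at all clear one exists for short cycles with external degree~$3$; in any case the promised ``detailed case distinction'' cannot be finite, since the number of partial cycles and their sizes both scale with $L=\Theta(\Delta)$.

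This is exactly why the paper does \emph{not} use small cycles. It replaces each $G$-vertex by an $n^{2}\times n^{2}$ torus (so $n^{4}$ gadget vertices), attaches at most one inter-torus edge to any gadget vertex (hence degree~$5$), and spaces those attachments $\Theta(n\sqrt{n})$ apart. The normalisation argument then rests on an isoperimetric inequality for grids: cutting $x$ vertices out of an $N\times N$ torus severs at least $2\min\{x,N^{2}-x\}/(N-1)$ edges, which for $N=n^{2}$ dwarfs the at most $n$ inter-torus edges a gadget can carry. That quantitative gap is what forces optimal solutions to consist of whole tori; your cycles provide no comparable gap.
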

\begin{proof}

  Let~$(G = (V, E), s)$ be an instance of~\textsc{Clique} and without loss of generality assume that $|V|=n$~is an odd square number. The construction is as follows. For every vertex~$v$ in~$V$ introduce a quadratic grid graph~$T_v$ with $(n^2 + 1) \cdot (n^2 + 1)$~vertices. In order to simplify the analysis, identify vertices with degree three on opposite boundaries, creating a grid model of a torus in which every vertex has degree exactly four. More formally,~$T_v$ is the graph with vertex set~\[\left\{v^i_j \colon 0 \leq i, j, \leq n^2 - 1\right\}\] and edge set~\[\left\{\left\{v^i_j, v^{(i + 1) \bmod n^2}_j\right\}, \left\{v^i_j, v^i_{(j + 1) \bmod n^2}\right\} \colon 0 \leq i, j \leq n^2 - 1\right\}\text{.}\]
  Let~$\phi \colon V \to \{0, \ldots, n - 1\}$ be a bijection and let
  \begin{align*}
    \psi &\colon V \to \mathbb{N} \colon i \mapsto \left(\phi(i)n\sqrt{n}\right) \bmod n^2 \text{, and}\\
    \sigma &\colon V \to  \mathbb{N} \colon i \mapsto \left\lfloor \phi(i)n\sqrt{n} / n^2 \right\rfloor\text{.}
  \end{align*}
  For every edge~$\{u, v\} \in E$ connect the tori~$T_u$ and~$T_v$ by an ``inter-torus'' edge between the vertices~$u^{\psi(v)}_{\sigma(v)}$ and $v^{\psi(u)}_{\sigma(u)}$. Call the graph that is constructed in this way~$G^*$ and set the instance of \dsg{} \todo{{\sc Densest-$sn^4$-Subgraph}!?}to~$(G^*, sn^{4})$. Clearly, this construction can be carried out in polynomial time. We decide~$(G, s)$ to be a yes-instance if and only if an optimal solution of~$(G^*, sn^{4})$ contains at least $2sn^{4} + s(s-1)/2$~edges.

  For the correctness, first, observe that~$G$ contains a subgraph of~$G^*$ with $2sn^{4} + s(s-1)/2$~edges if~$G$ contains a clique with $s$~vertices: for every vertex~$v$ in the clique choose every vertex of the torus~$T_v$. For the other direction, we prove that every optimal solution to~$(G^*, sn^{4})$ comprises a collection of complete tori. That is, if an optimal solution contains a vertex of one of the tori~$T_v$, every vertex of~$T_v$ is in the solution. Then it is clear that if the optimal solution contains $2sn^{4} + s(s-1)/2$~edges, the vertices in~$G$ that correspond to a torus~$T_v$ in the optimal solution must induce a clique on $s$~vertices.

  An optimal solution~$K$ to~$(G^*, sn^{4})$ consists of a number of possibly proper vertex subsets of the tori, altogether containing exactly~$sn^{4}$~vertices. For the sake of contradiction assume that at least one of these vertex subsets is proper, that is, not all of the vertices of some torus are contained in~$K$. Call the vertices in~$K$ ``black'' and the vertices in~$V \setminus K$ ``white''. Call edges between black and white vertices within one torus ``cut''. Call tori~$T_v$ with at most~$n^4 - n^3$ black vertices ``small'' and the remaining tori ``large''. We prove the following.
  \begin{quote}
    \textbf{Claim:} In each small torus~$T_v$, there are at least twice as many cut edges as there are inter-torus edges incident with black vertices of~$T_v$. 
  \end{quote}
  Before proving the claim, we note that it implies the lemma. There are at most~$sn^4/(n^4 - n^3) = sn/(n-1)$ large tori, that is, for every sufficiently large\footnote{The relation $sn/(n-1) \geq s + 1$ holds if and only if~$s/(s+1) \geq (n-1)/n$. Since~$(n - 1)/n$ is strictly monotone ascending for~$n \geq 1$ and equality holds for~$n = s + 1$, we have that~$sn/(n-1) < s + 1$ for all~$n > s + 1$. We can assume this without loss of generality.}~$n$, there are at most~$s$ large tori. Consider distributing all black vertices of small tori to large tori, completing them in the process, and combining the remaining black vertices in small tori to complete tori. This procedure cannot decrease the number of edges induced by the black vertices: All inter-torus edges between large tori are preserved and for every inter-torus edge incident with a black vertex of a small torus, there were at least two cut edges, and thus, at least one edge is added to the solution. (Observe that after this procedure, there are no cut edges remaining.) Furthermore, after this procedure, each black vertex has at least four black neighbors within its torus, whereas before it had at most four. Thus, we may assume that there are no incomplete tori and the correctness of our construction follows.

  We prove our claim using the following result by Efe and Feng~\cite{EF07}.
  \begin{quote}
    \textbf{Fact:} In order to remove $x$~vertices from a quadratic grid with $N \cdot N$~vertices, where $N$~is odd, at least~$2\min\{x, N^2 - x\}/(N - 1)$ edges have to be cut.
  \end{quote}
  It is clear that in a torus with the same number of vertices at least as many edges have to be cut.\footnote{Consider removing a set of vertices from a quadratic grid and then removing it from a torus on the same vertices as the grid. The number of cut edges incident with each vertex differ only for vertices that are removed from the boundary of the grid. But in the torus, these vertices have higher degree than in the grid.} From this statement, our claim follows directly for tori with at least~$n^3$ and at most~$n^4-n^3$ vertices. Assume that there is a torus~$T_v$ with~$n^4/2 \leq n_v \leq n^4 - n^3$ black vertices. The number of cut edges is
  \begin{align*}
    \frac{2(n^4 - n_v)}{n^2 -1} \geq \frac{2n^3}{n^2 - 1} > 2n\text{.}
  \end{align*}
  Since there are at most~$n$ inter-torus edges incident with~$T_v$, there are at least twice as many cut edges as there are inter-torus edges. The same follows analogously for tori~$T_v$ with~$n^3 \leq n_v \leq n^4/2$ black vertices.

  It remains to show the claim for tori with at most~$n^3$ black vertices. Consider such a torus~$T_v$ and call a vertex~$v^i_j$ in~$T_v$ to reside in ``row''~$i$ and in ``column''~$j$. We show that incident with each connected component~$C$ of black vertices in~$T_v$ there are at least twice as many cut edges as there are inter-torus edges incident with black vertices in~$C$. The claim then follows, since there are no cut edges that are incident with two of~$T_v$'s black connected components.  First, assume that~$C$ contains each vertex of some row~$i$. Since~$T_v$ contains at most~$n^3$ black vertices, the number of columns in~$T_v$ with fewer than~$n^2$ black vertices is at least~$n^2 - n$ and this gives a lower bound on the number of cut edges incident with~$C$. Thus, since there are at most~$n$ inter-torus edges incident with~$T_v$, for every sufficiently large~$n$, there are at least twice as many cut edges as inter-torus edges incident with~$C$. Analogously we can prove this, if~$C$ contains each vertex of some column. Next, assume that~$C$ does not contain complete rows or columns. If there is only one inter-torus edge incident with~$C$, clearly there are always more than two cut edges. If there are at least two inter-torus edges incident with~$C$, consider a shortest black path between these two edges. By the placement of the inter-torus edges in~$T_v$, this path either touches at least~$n\sqrt{n}$ rows or at least as many columns. Since there are no complete columns or rows in~$C$,~$n\sqrt{n}$ is then a lower bound on the number of cut edges incident with~$C$. Thus, our claim now follows for all~$n$ larger than some constant and our construction is correct.
\end{proof}
Next, we reduce the maximum degree in hard instances to four, by replacing each vertex with a ``solid'' gadget, that is, a gadget that can be assumed to be either completely contained in a solution or to be disjoint to it. Thus, in order to maximize the number of edges induced by a solution, one has to choose gadgets correspondingly to vertices in the original graph.
\begin{lemma}\label{lem:dks-md4}
  There is a polynomial-time many-one reduction from \dsg{} in graphs with maximum degree six or maximum degree five to \dsg{} in graphs with maximum degree four.
\end{lemma}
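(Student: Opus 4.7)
The approach is to replace every vertex of the input graph $G$ by a copy of a fixed, constant-size \emph{solid} gadget~$H$. I would design~$H$ so that it has a fixed number~$N$ of vertices, maximum internal degree four, and at least six designated \emph{port} vertices of internal degree three; having six ports suffices to handle both the maximum-degree-five and maximum-degree-six cases simultaneously. For each vertex~$v \in V(G)$, introduce a private copy~$H_v$ of~$H$, and realize each edge~$\{u,v\}\in E(G)$ as an edge between a previously unused port of~$H_u$ and a previously unused port of~$H_v$. Since every port has internal degree three and receives at most one inter-gadget edge, the resulting graph~$G'$ has maximum degree four. The new \dsg{} instance is~$(G', kN)$, and I would relate the edge thresholds by the constant additive offset~$k\cdot|E(H)|$.

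The bulk of the correctness argument will be the claim that some optimal solution~$K'$ to~$(G', kN)$ is a disjoint union of exactly~$k$ complete copies of~$H$. Once this is established, the bijection with solutions of the original instance follows at once: the $k\cdot|E(H)|$ internal edges contribute a fixed amount, while the inter-gadget edges in~$G'[K']$ correspond precisely to the edges of~$G$ induced by the $k$ vertices whose gadget copies are selected.

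The main obstacle will be the solidness claim. Following the pattern of the torus argument in the previous lemma, I would proceed by local replacement: given an optimal~$K'$ containing a partial gadget~$H_v$ (that is, one with $0 < |K' \cap V(H_v)| < N$), shift vertices between partial gadgets so as to strictly decrease their number without decreasing the edge count. For such a swap to be profitable, $H$ must be designed so that the marginal internal-edge gain from adding a new vertex to any partial gadget strictly dominates the at-most-one inter-gadget edge potentially lost at the corresponding port. A natural candidate is a small grid or torus on $O(1)$ vertices in which six prescribed vertices have had a single incident internal edge removed in order to drop their degree to three. Verifying solidness for such a gadget requires a case distinction over the size and shape of the partial selection, the positions of the selected ports, and which inter-gadget edges happen to be active — exactly the ``elaborate case-distinctions which are most tedious to verify'' that the introduction warns about, and which I expect to dominate the length of the proof.
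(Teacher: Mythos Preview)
Your outline captures exactly the paper's strategy: replace each vertex by a constant-size gadget with six degree-three ports, set the new target to~$kN$, and argue solidness by local swapping. The paper's concrete gadget is not a grid or torus fragment, however, but an eight-vertex ``fence'': a six-cycle $v_1,\ldots,v_6$ (the ports) together with two interior degree-four vertices $v_7,v_8$ adjacent to each other and to three consecutive cycle vertices each, giving $13$ internal edges and the instance~$(G^*,8k)$.

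One point in your sketch deserves caution. You phrase the swap criterion as ``the marginal internal-edge gain from adding a new vertex to any partial gadget strictly dominates the at-most-one inter-gadget edge potentially lost at the corresponding port.'' This single-vertex, strict-domination picture does not survive contact with the actual numbers: when completing a gadget missing $i$ vertices you may recover as few as $2i+1$ internal edges, while the $i$ vertices you free up elsewhere can carry up to $2i+1$ black edges (internal plus port) --- the bounds meet exactly, not strictly. The paper therefore organizes the argument around \emph{multi-vertex} swaps: for each $i\in\{1,2,3,4\}$ it proves both a lower bound of $2i+1$ (respectively $8$ for $i=4$) on non-black edges inside a gadget with $i$ white vertices, and the existence, in some other partial gadget, of a set of $i$ black vertices incident with at most that many black edges in total. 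Establishing the second half requires first \emph{normalizing} the solution (without changing per-gadget black counts) so that inside each gadget the white vertices are connected and no inner vertex is white while its outer neighbours are black; only after this normalization does the case analysis over the finitely many white-vertex configurations go through. If you pursue your grid-with-deleted-edges candidate, expect to need the same two-sided matched bounds and an analogous normalization step rather than a per-vertex domination argument --- and be aware that deleting edges to create ports lowers exactly the internal gains you are relying on, so the feasibility of such a gadget is not automatic.
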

\begin{proof}
  Let~$(G, k)$ be an instance of \dsg{} where each vertex in~$G$ has degree at most six or at most five. We replace each vertex~$v$ in~$G$ by the graph shown in \autoref{fig:dks-fence-gadget} (the ``fence gadget'') and distribute~$v$'s edges to the outer vertices such that each vertex in the fence gadget gets degree at most four. Call the graph constructed in this way~$G^*$ and set the new instance of \dsg{} \todo{{\sc Densest-$8k$-Subgraph}!?}to~$(G^*, 8k)$. We argue that from any solution~$K$ to~$(G^*, 8k)$ we can construct another solution with at least as many edges such that each fence gadget either is completely contained in~$K$ or its vertices are disjoint to~$K$. Then, from any solution to~$(G^*, 8k)$ with~$13k + x$ edges (there are~$13$ edges in a gadget) we can construct a  solution for~$(G, k)$ with at least~$x$ edges and vice-versa by completing every gadget and then exchanging gadgets with their corresponding vertices. Thus, since it is easy to achieve at least~$13k$ edges in a solution for~$(G^*, 8k)$, finding an optimal solution or any solution with a number of edges above a given threshold is equivalent in these instances. It is also not hard to see, that constructing the solution for~$(G, k)$ from a solution for~$(G^*, 8k)$ can be done in polynomial time, we omit the details.\todo{MS: Dürfen wir das? Das ist: höchstens~$k$-mal iterieren zum Komplettieren der Gadgets. Darin: Immer wieder den Claim herstellen (mittels iterieren über alle Gadgets und Property 1 und 2 herstellen = $\bigO(n)$), und beim Komplettieren entsprechende Austauschknoten finden = auf jeden Fall $\bigO(n^8)$. Wäre unschön, den Text noch mehr vollzustopfen. Oder einen Extraabsatz ganz am Ende dazu?}
  \begin{figure}[t]
    \centering
    \begin{tikzpicture}
      \tikzstyle{every node} = [circle, fill=gray!30]
      \node (v7) at (0: 0) {$v_7$};
      \node (v2) at +(90: 1) {$v_2$};
      \node (v1) at +(2 * 90: 1) {$v_1$};
      \node (v6) at +(3 * 90: 1) {$v_6$};
      \node (v8) at (0: 1) {$v_8$};
      \node (v3) at ($(0: 1)+(90: 1)$) {$v_3$};
      \node (v4) at ($(0: 1)+(0: 1)$) {$v_4$};
      \node (v5) at ($(0: 1)+(- 90: 1)$) {$v_5$};

      \draw (v1) -- (v2)
            (v1) -- (v7)
            (v1) -- (v6)
            (v2) -- (v7)
            (v2) -- (v3)
            (v3) -- (v8)
            (v3) -- (v4)
            (v4) -- (v5)
            (v4) -- (v8)
            (v5) -- (v6)
            (v5) -- (v8)
            (v6) -- (v7)
            (v7) -- (v8);

    \end{tikzpicture}
    \caption{The fence gadget used in \autoref{lem:dks-md4}. Every vertex~$v$ in the original graph is replaced by this graph. The edges of~$v$ are distributed to the \emph{outer} vertices~$v_1, \ldots, v_6$. The \emph{inner} vertices~$v_7, v_8$ do not get additional edges.}
    \label{fig:dks-fence-gadget}
  \end{figure}
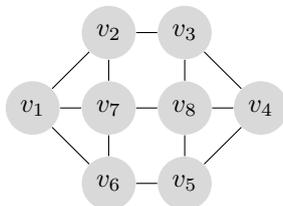
   
  Let~$K$ be a solution for~$(G^*, 8k)$. Call the vertices in~$K$ ``black'' and the remaining ones ``white''. The \emph{black-degree} of a vertex is the number of black vertices adjacent to it. Call edges between two black vertices ``black'' and the remaining edges ``non-black''. We consider fence gadgets that are partially black and argue that we can move black vertices between gadgets without losing edges, ultimately completing all gadgets. To do this, we prove the following.
  \begin{quote}
    {\bf Claim:} Given a solution~$K$, we can modify~$K$ without losing black edges and without changing the number of black vertices in any fence gadget such that the following holds.
    \begin{enumerate}
    \item For~$i \in \{1, 2, 3\}$ it holds that in each fence gadget with exactly~$i$ white vertices, there are at least~$2i + 1$ non-black edges.
    \item For~$j \in \{1, 2, 3\}$ it holds that in each fence gadget with at least~$j$ black vertices and at least~$j$ white vertices, there is a set of~$j$ black vertices in the gadget that is incident with at most~$2j + 1$ black edges (counting both inner-gadget edges and edges between gadgets).
    \item Each fence gadget with exactly four white vertices has at least eight non-black edges (within the gadget) and a set of four black vertices that is incident with at most eight black edges (both within and outside of the gadget).
    \item Each fence gadget with exactly~$1 \leq i \leq 3$ black vertices contains a black vertex with black-degree at most two.
    \end{enumerate}
  \end{quote}
  \todo{MS: Ich finde es besser, erst zu zeigen, wie man den Claim anwendet und ihn dann zu beweisen. Egal wierum man es macht, man muss immer scrollen; entweder ist die Figure mit dem Gadget nicht da, oder der Claim selbst, wenn man ihn anwendet. Aber bei der Variante den Claim erst hinterher zu beweisen, verliert man imho weniger Motivation beim Beweislesen, weil man erst sieht, warum der Claim so aussieht, wie er aussieht, und nicht erst irgendeine Aussage überprüfen muss, von der man keine Ahnung hat, wie sie angewendet wird oder einen weiterbringt.}
  We first show how this claim can be used to complete all gadgets in~$K$. Assume that the claim holds. We show how we can make gadgets completely black that have some minimum number of white vertices other than~$8$. Let~$1 \leq i < 8$ be the minimum number of white vertices in gadgets and let~$g_i$ be a gadget with exactly~$i$ white vertices.
  \begin{itemize}
  \item Assume that~$i = 1$. It follows that~$g_i$ contains a white vertex~$v$ that has at least three black neighbors and Part~2 of our claim tells us that we can color a vertex in another gadget white and color~$v$ black without losing black edges. 
  \item Assume that~$i = 2$. 
    Then, either there are two gadgets~$g', g''$ with exactly one black vertex or there is a gadget~$g'$ with at least two black vertices and at least two white vertices. In the first case, we can color both black vertices of~$g'$ and~$g''$ white, losing at most two black edges in the process, and complete~$g_i$, gaining at least five black edges (Part~1 of the claim). In the second case, we use Part~2 to color two black vertices of~$g''$ white, losing at most five black edges and complete~$g_i$, gaining at least five black edges. 
  \item Assume that~$i = 3$. Then, either we can use Part~2 or Part~4 of the claim to find three black vertices such that, if we color them white, we lose at most seven black edges. We color the found vertices white and complete the gadget~$g_i$ without losing black edges by Part~1.
  \item Assume that~$i = 4$. Then, either we can find another gadget with exactly four white vertices and use Part~3 of our claim to color its vertices white and complete~$g_i$ without losing black edges, or, if there is no other gadget with exactly four white vertices, by Part~4 of our claim, we can successively find four vertices each with black-degree at most two, color them white and, after that, we can complete~$g_i$. We lose at most eight black edges, but gain at least eight by Part~3 of the claim.
  \item Assume that~$i = 5$. Observe that in a gadget with exactly three black vertices, the number of non-black edges is at least ten. Thus, we can again apply Part~4 of our claim. 
  \item The case that~$i = 6$ is analogous to~$i = 5$ and the case~$i = 7$ is trivial. 
  \end{itemize}
  To make each gadget either completely black or completely white, we first ensure that each part of the claim holds, then apply a modification as described above to complete the gadget~$g_i$ and iterate. The modifications that are necessary for each part of the claim to hold do not change the number of black vertices in any gadget, and in each of the above cases, we modify~$K$ such that the number of completely black gadgets increases. Thus, after at most~$k$ steps, each gadget is either completely black or completely white, and the correctness of our construction follows if the claim holds.

  Next, we prove the claim. We first prove that we can make two assumptions about the solution~$K$ without loss of generality. We modify~$K$ without losing edges and without changing the number of black vertices in any fence gadget such that the following holds.
  \begin{quote}
    {\bf Property 1:} In every gadget, there is no white inner vertex that has at most one outer white neighbor and no white inner neighbor.
  \end{quote}
  Assume that there is such a white inner vertex~$v$. If~$v$ does not have a white outer neighbor, we may simply color one of its black outer neighbors white, losing at most three black edges and then color~$v$ black, gaining at least three black edges. If~$v$ has a white outer neighbor, then it has a black outer neighbor~$w$ with black-degree at most two. Thus, we can make~$w$ white, losing at most two black edges, and make~$v$ black, gaining at least two black edges.
  \begin{quote}
    {\bf Property 2:} In every gadget with at most four white vertices, the white vertices induce a connected graph.
  \end{quote}
  Consider a fence gadget with at least two white vertices. If there is a singleton white vertex~$w$, we color it black and color white a black neighbor~$b$ of one of the other white vertices. In the exchange, if~$b$ and~$w$ are not neighbors, at most three black edges incident with~$b$ are lost and at least three black edges incident with~$w$ are gained. If~$b$ and~$w$ are neighbors, at most two black edges are lost and at least two black edges are gained. In the following, assume that there are no singleton white vertices. If there are exactly three white vertices that do not induce a connected graph, then there is at least one singleton white vertex. Hence, we may assume that there are at least four white vertices. We remove each maximal connected component~$c$ with at most two white vertices: Observe that if both inner vertices are white, the white vertices always induce a connected graph. Thus, we may assume that there is only one inner white vertex and, by Property~1,~$c$ has no inner vertices. We now know that if the white vertices do not induce a connected graph, there are at least four of them, the minimum size of a white connected component is two, each white connected component of size exactly two contains no inner vertices, and there is at most one white inner vertex. 
  Without loss of generality, there are only two 
  possible configurations left. Either the 
  vertices~$v_1,v_2, v_4, v_5$ or the vertices~$v_2, v_3, v_5, v_6$ are white. 
  In case one, 
  we can make~$v_1$ black and~$v_3$ white, losing at most two black edges and gaining at least two black edges. In case two, we can color~$v_2$ black and~$v_4$ white. Property~2 now follows. 

  Notice that ensuring either property does not change the number of black vertices in the gadget. Furthermore, since ensuring Property~1 does not destroy Property~2, and since Property~1 does not depend on Property~2, we can modify the solution~$K$ such that both statements hold simultaneously.

  To prove Part~1 through~4 of the claim, we iterate through the the number~$i$ of white vertices in a gadget, and show that each of the relevant statements hold; no further modifications are necessary. Let~$g_i$ be a gadget with exactly~$i$ white vertices.
  \begin{itemize}
  \item The claim is trivial for~$i = 1$.
  \item Assume that~$i = 2$. Both white vertices of~$g_i$ have at least three incident non-black edges. The neighborhood of the white vertices overlaps in at most one vertex, and, thus, we have at least~$2 \cdot 3-1$ non-black edges, matching the given bound of Part~1 of the claim. Part~2 of the claim follows by choosing two black vertices~$v, w$ in the neighborhood of the white vertices such that~$v, w$ are neighbors. It is easily seen that such black vertices must exist. 
  \item Assume that~$i = 3$. 
    \begin{itemize}
    \item If both inner vertices of~$g_i$ are white, these two vertices alone have seven incident non-black edges and Part~1 of the claim follows. For Part~2 and~$j = 3$, we choose three consecutive black vertices on the cycle formed by the outer vertices. Each of these vertices has at most three incident black edges and the neighborhoods overlap in at most one vertex for neighboring vertices. Thus, these three vertices have at most~$3 \cdot 3 - 2 = 7$ incident black edges. If we are to choose only two black vertices, that is,~$j = 2$, we elect the neighbors of the outer white vertex. Both these black vertices have black-degree at most two. The case~$j = 1$ is trivial and Part~2 of the claim follows. 
    \item If exactly one inner vertex is white, then, by Property~1, it has two white outer neighbors. Without loss of generality, 
      let the white vertices be~$v_3, v_4, v_8$. These three vertices have at least seven incident non-black edges and the first part of the claim follows. Choosing the black vertices~$\{v_2, v_5, v_7\}$, $\{v_2, v_7\}$, and~$\{v_5\}$ yields at most~$7$, $5$, and~$2$ incident black edges, respectively. Thus, Part~2 of the claim follows. 
    \item Now, if all three white vertices are outer vertices, by Property~2, they form a connected component. Without loss of generality, 
      there are only two possible configurations:~$v_2, v_3, v_4$, and~$v_3, v_4, v_5$, each yielding at least seven non-black edges, fulfilling the first part of the claim. For Part~2, we choose~$\{v_5, v_7, v_8\}$, $\{v_7, v_8\}$, and~$\{v_8\}$, yielding at most~$6$, $4$, and~$2$ incident black edges in the first configuration. For the second configuration, we choose~$\{v_2, v_7, v_8\}$, $\{v_7, v_8\}$, and~$\{v_8\}$, yielding at most~$6$, $4$, and~$1$ incident black edges and, thus, Part~2 of the claim holds. 
    \end{itemize}
  \item Now assume that~$i = 4$. 
    \begin{itemize}
    \item If both inner vertices are white, they yield at least seven non-black vertices plus at least two from one of the outer white vertices, satisfying the lower bound of Part~3 of the claim. For the upper bound on the number of black edges incident with some black vertices in Part~3, observe that there is a black vertex with black-degree at most two (incident with one of the outer white vertices) and deleting it makes its black neighbor (if there is any) have black-degree at most two. For the upper bound from Part~2 of our claim, analogously to~$i = 3$, we choose three consecutive black vertices or, if this is not possible, neighbors of the white outer vertices. It is easy to check that Part~2 of the claim holds in this case. 
    \item Now assume that exactly one inner vertex is white. Then, by Property~1, it has at least two white outer neighbors. Without loss of generality, 
      the only possible configurations are $v_2, v_3, v_4, v_8$ and $v_3, v_4, v_5, v_8$. Each configuration yields at least eight non-black edges. Since there are exactly~$13$ edges in the gadget and at most three black vertices with edges, the upper bound of Part~3 of our claim follows. For the upper bound of Part~2, we choose $\{v_1, v_5, v_7\}$, $\{v_7, v_5\}$, and~$\{v_7\}$ in configuration one, yielding at most~$6$, $4$, and~$2$ black edges, and $\{v_2, v_6, v_7\}$, $\{v_2, v_7\}$, and~$\{v_2\}$, yielding at most $7$, $5$, and~$3$ edges. 
    \item Assume that every white vertex is an outer vertex. Then, by Property~2, they induce a connected graph. Thus, without loss of generality, the only possible configurations are~$v_1$ through~$v_4$ and~$v_2$ through~$v_5$. The lower bound on the number of non-black edges in Part~3 follows, since the vertices~$v_2, v_3, v_4$ are incident with seven edges in the gadget and both~$v_1$ and~$v_5$ contribute at least one further non-black edge. The upper bound on black edges from Part~3 follows, since there are at most~$13 - 8 = 5$ black edges within the gadget and at most two black edges not within the gadget incident to~$v_6$ and~$v_5$ or~$v_1$, respectively. For Part~2 we may choose the vertices~$\{v_6, v_7, v_8\}$, $\{v_7, v_8\}$, and~$\{v_6\}$, yielding at most~$5$, $3$, and~$3$ incident black edges in both configurations. 
    \end{itemize}
  \item It remains to prove Part~2 and Part~4 of our claim for~$5 \leq i \leq 7$. The only nontrivial case is~$i = 5$, that is, there are exactly three black vertices. Both Part~2 and Part~4 trivially follow, if the black vertices are not connected. If they are connected, they either form a triangle or a path. In the first case, at most two black vertices are incident with non-gadget edges, and, thus, the remaining vertex has black-degree at most two and the black vertices have at most five incident black edges. In the second case, there are two black vertices with black-degree at most two and the black vertices have incident at most five black edges. 
  \end{itemize}
  Thus, our claim follows, and our construction is correct.
\end{proof}
We now use the same strategy, albeit with a much simpler proof, replacing the vertices by a further gadget to further reduce the maximum degree in the input graph.
\begin{lemma}
  There is a polynomial-time many-one reduction from \dsg{} in graphs with maximum degree four to \dsg{} in graphs with maximum degree three.
\end{lemma}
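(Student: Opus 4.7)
The plan is to mirror the strategy of \autoref{lem:dks-md4}. Given a \dsg{}-instance $(G,k)$ with maximum degree four, I would replace each vertex $v$ by a fresh copy $H_v$ of a fixed small ``degree-three gadget'' $H$, distributing the (at most four) edges of $v$ among four designated \emph{outer} vertices of $H_v$, each of internal degree at most two. Setting the new solution size to $k' = |V(H)|\cdot k$ yields a graph $G^*$ of maximum degree three, and, as before, correctness reduces to the statement that some optimum solution $K$ to $(G^*,k')$ has the property that every gadget copy is either entirely in $K$ or entirely disjoint from $K$.

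A natural and concrete candidate for $H$ is the \emph{dumbbell} on six vertices and seven edges (two triangles joined by a bridge): the four triangle tips serve as outer vertices (internal degree two each, so they admit one additional external edge), while the two bridge endpoints are inner vertices of internal degree three. With this choice, a complete gadget contributes seven edges, so the YES/NO threshold for $(G^*,k')$ becomes $7k + x$, where $x$ is the value of an optimum solution for $(G,k)$, and the reduction is clearly polynomial.

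The heart of the proof is then the solidness claim, proved by case analysis on the number $i$ of white (non-solution) vertices in a partially black gadget. Exactly as in \autoref{lem:dks-md4}, I would first establish two normalisation properties---that the white vertices of a gadget can be made to induce a connected subgraph, and that no inner vertex is unnecessarily white---which eliminate all but a handful of symmetry classes in each case. For each surviving configuration, I would check by direct counting that the number of non-black edges inside the gadget strictly exceeds the number of black edges one needs to sacrifice elsewhere, and that the low-black-degree black vertices required for the exchange are actually available. Gadgets are then completed iteratively, starting from the gadget with the fewest white vertices, exactly as in the previous lemma.

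The main obstacle, as before, is the bookkeeping of the exchange argument: ensuring that enough low-black-degree vertices are always borrowable elsewhere, and that the normalisation steps do not invalidate each other across iterations. This should be substantially easier than in \autoref{lem:dks-md4}, since the dumbbell has only six vertices and essentially two symmetry types of vertex, so that both the normalisation and the main case analysis collapse to short finite checks, justifying the paper's claim that the proof of this reduction is ``much simpler''.
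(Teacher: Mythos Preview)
Your overall plan---replace each vertex by a fixed small gadget, set $k' = |V(H)|\cdot k$, and prove a solidness claim via local exchanges---is exactly the paper's strategy. The concrete gadget you propose, however, does \emph{not} satisfy the solidness property, so the reduction as stated is incorrect. The dumbbell contains a triangle on three vertices with three internal edges and two outer vertices; per vertex this is almost as dense as the whole six-vertex gadget with its seven edges, so two such triangles taken from different gadgets can beat one complete gadget once they pick up enough inter-gadget edges. Concretely, take $G=K_3$, $k=2$ (hence $k'=12$), and distribute the two incident edges of each vertex to the two outer vertices $a,b$ lying on the \emph{same} triangle of its dumbbell---a valid distribution in your scheme. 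Then every union of two complete gadgets induces $14+1=15$ edges, whereas the $12$-vertex set consisting of one complete gadget together with the $\{a,b,c\}$-triangles of the other two gadgets induces $7+3+3=13$ internal edges plus all three inter-gadget edges, for a total of $16$. No normalisation or exchange step can repair this: the optimum of $(G^*,12)$ is strictly larger than the optimum over unions of complete gadgets, so solidness fails outright and the reduction does not preserve optimal values.

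The paper sidesteps this by taking $H$ to be the $4$-cycle (so $k'=4k$, four internal edges per gadget, all four vertices outer). In a $4$-cycle every proper nonempty vertex subset induces strictly fewer edges than it has vertices, while the full cycle has equality; this one-edge slack per incomplete gadget is precisely what drives the exchange argument. Accordingly the paper needs no normalisation properties and essentially no case analysis: a gadget with a single black vertex is emptied for free, a gadget with three black vertices is completed by stealing a black vertex of black-degree at most two from another incomplete gadget, and two-against-two is handled directly. Replacing your dumbbell by the $4$-cycle (or, more generally, by any gadget in which no proper subset matches the full edge-to-vertex ratio) makes your plan work, and the resulting proof is then genuinely much shorter than that of \autoref{lem:dks-md4}.
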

\begin{proof}
  Let~$(G, k)$ be an instance of \dsg{} where each vertex in~$G$ has degree at most four. We replace each vertex~$v$ in the graph~$G$ by a four-vertex cycle (the ``cycle gadget''), and distribute~$v$'s edges to the vertices of the cycle such that each vertex in the cycle gets degree at most three. Call the graph constructed in this way~$G^*$. We set the new instance of \dsg{} to~$(G^*, 4k)$. We show \todo{MS: Was spricht gegen argue?}that this construction is correct by showing that from any solution to~$(G^*, 4k)$ we can construct a solution with at least as many edges such that the vertices of each cycle gadget are either completely contained in the solution or are disjoint. Thus, from any solution to~$(G^*, 4k)$ with~$4k + x$ edges, we can construct a solution for~$(G, k)$ with at least~$x$ edges and vice-versa. It is clear, that solutions for~$(G*, 4k)$ with at least~$4k$ edges are achievable in polynomial time, and it will also not be hard to check that we construct the solution for~$(G, k)$ from a solution for~$(G^*, 4k)$ in polynomial time. Thus, finding an optimal solution or any solution with a number of edges above some given threshold is polynomial-time equivalent in these instances.

  Consider a solution~$K$ to~$(G^*, 4k)$. We move vertices between gadgets that are not fully contained in~$K$, and ultimately achieve that each gadget is either completely contained in~$K$ or disjoint to it. Furthermore, in the process the number of edges induced by~$K$ does not decrease. We proceed step-wise, completing at least one gadget in each step. Observe first, that in each step, we may assume that incomplete gadgets contain at least two vertices from~$K$. Otherwise, we may simply move the vertices of~$K$ from one gadget to another without losing edges induced by~$K$. Now, consider a cycle gadget~$g$ containing exactly three vertices of~$K$, if there is any. Every incomplete cycle gadget has at least one vertex with degree at most two in~$G^*[K]$. Thus, since there is at least one further incomplete cycle gadget, we may simply move such a low degree vertex to~$g$ without losing edges. If there are no cycle gadgets with exactly three vertices, there remain only cycle gadgets with exactly two vertices in~$K$. However, since the vertices of these gadgets are incident with at most three edges in~$G^*[K]$, we can move two vertices from one gadget to another without losing edges. This completes the description of one step. In each step, at least one gadget is completed, and after at most~$k$ steps of moving vertices of~$K$ between gadgets, we obtain a solution~$K$ such that every cycle gadget is either completely contained in~$K$ is disjoint to it.
\end{proof}

Combining the three lemmas of this section, the following is now easily obtained.
\begin{theorem}
  \label{thm:dsg-three}
  \dsg{} is NP-hard even on graphs with maximum degree three.
\end{theorem}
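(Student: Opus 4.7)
The plan is to chain the three lemmas of the section by transitivity of polynomial-time many-one reductions. Since \textsc{Clique} is NP-hard, the construction in the first lemma of the section gives a polynomial-time many-one reduction from an NP-hard problem to \dsg{} on graphs of maximum degree at most five, so the latter is NP-hard. Applying \autoref{lem:dks-md4} (with the input graph having maximum degree five, which falls under the stated hypothesis of the lemma) then yields a polynomial-time many-one reduction to \dsg{} on graphs of maximum degree at most four, transferring NP-hardness accordingly.

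Next I would invoke the third lemma of the section to obtain a polynomial-time many-one reduction from \dsg{} on graphs of maximum degree at most four to \dsg{} on graphs of maximum degree at most three. Composing the two reductions produces a single polynomial-time many-one reduction from \textsc{Clique} to the latter problem, establishing the theorem.

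The only subtlety worth mentioning is that, strictly speaking, NP-hardness requires a decision formulation, whereas \dsg{} is stated as an optimization problem; however, each of the three lemmas in fact gives a reduction that preserves the number of edges up to a known additive constant determined by the gadget sizes (namely $0$, $13k$, and $4k$ respectively), so the natural threshold version of \dsg{} is reduced at each step. I do not anticipate any real obstacle: the proof is a routine composition, and the only thing to check is that the output of one reduction fits the input hypothesis of the next, which it does.
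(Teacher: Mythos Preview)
Your proposal is correct and matches the paper's own argument: the theorem is stated right after the three lemmas with the remark that combining them yields the result, which is precisely the chaining by transitivity you describe. Your extra observation about the decision versus optimization formulation (and the additive shifts $13k$ and $4k$ from the gadget edges) is accurate and simply makes explicit what the paper leaves implicit.
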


\section{Conclusion and Outlook}
\label{sec:conclusion}

We have demonstrated that stubbornness not only may lead to the proof of a desired result. It also can be used to obtain much more complicated proofs of previously known ones. We leave it as an open question as to which elegant proofs can be reproved more elaborately and in a nontrivial way.

\paragraph{Acknowledgement.} I thank Christian Komusiewicz for insightful discussions about the matter of the paper and for proofreading.

\bibliographystyle{abbrv}
\bibliography{mu-clique,more-comp}

\end{document}